\def\R{\mathbb{R}}
\def\Lip{\mathop{\rm Lip}}
\def\argmin{\mathop{\rm arg\, min}}
\def\P{{\mathcal P}}
\def\hQ{{\hat Q}}
\def\hmu{{\hat \mu}}
\def\sX{{\mathsf X}}
\def\sA{{\mathsf A}}
\def\sE{{\mathsf E}}
\theoremstyle{remark}
\newtheorem{definition}{Definition}
\newtheorem{theorem}{Theorem}
\newtheorem{lemma}{Lemma}
\theoremstyle{remark}
\newtheorem{remark}{Remark}
\newtheorem{assumption}{Assumption}
\begin{document}

\begin{frontmatter}
\sloppy
\title{Value Iteration Algorithm for Mean-field Games}

\author[BCN]{Berkay Anahtarci, Can Deha Kariksiz, Naci Saldi}

\address[BCN]{\"{O}zye\u{g}in University, \c{C}ekmek\"{o}y, \.{I}stanbul, Turkey, Emails:berkay.anahtarci@ozyegin.edu.tr,candeha@gmail.com,naci.saldi@ozyegin.edu.tr}  
       
\begin{abstract}
In the literature, existence of mean-field equilibria has been established for discrete-time mean field games under both the discounted cost and the average cost optimality criteria. In this paper, we provide a value iteration algorithm to compute  mean-field equilibrium for both the discounted cost and the average cost criteria, whose existence proved previously. We establish that the value iteration algorithm converges to the fixed point of a mean-field equilibrium operator. Then, using this fixed point, we construct a mean-field equilibrium. In our value iteration algorithm, we use $Q$-functions instead of value functions.  
\end{abstract}

\begin{keyword}                           
Mean-field games, Value iteration algorithm, discounted cost, average cost.
\end{keyword}  

\end{frontmatter}

\section{Introduction}\label{sec1}

In this paper, we propose a value iteration algorithm to compute an equilibrium solution for discrete-time Polish state mean-field games under both the discounted cost and the average cost optimality criteria. In the literature, the existence of mean-field equilibria has been established for a very general class of mean-field game models under both the discounted cost \cite{SaBaRaSIAM} and the average cost \cite{Wie19} optimality criteria. Here, using $Q$-functions, we develop a value iteration algorithm via the \emph{so-called} mean-field equilibrium (MFE) operator. This operator is very similar to the Bellman optimality operator in classical stochastic control problems. The (only) difference is that the MFE operator also updates the distribution of the state, in addition to the $Q$-function, in each iteration. We first establish that the MFE operator is a contraction under some regularity assumptions on the system components. Therefore, by Banach Fixed Point Theorem, there exists a fixed point of this operator. Then, we construct a mean-field equilibrium using the fixed point of MFE operator. Here, we use $Q$-functions instead of value functions, which are generally used in classical value iteration algorithms.

Mean-field games are the infinite population limits of finite-agent stochastic games with mean-field interactions. Establishing the existence of the Nash equilibrium for a finite-agent game problem is in general infeasible due to the decentralized nature of the information structure and the large number of coupled agents. To overcome these difficulties, one possible way is to consider the infinite-population limit of the problem and use the fact that the equilibrium solution of the infinite-population limit is approximately Nash in a finite-agent setting when the number of agents are sufficiently large. Note that, in the limiting case, a generic agent is faced with a single-agent stochastic control problem with a constraint on the distribution of the state at each time (i.e., a mean-field game problem). The equilibrium solution in the infinite-population limit is a pair which contains a policy and a state measure. This pair should satisfy Nash certainty equivalence (NCE) principle \cite{HuMaCa06} which states that, under a given state measure, the policy should be optimal and when the generic agent applies this policy, the resulting distribution of the agent's state is same as the state measure. The purpose of this paper is to develop a value iteration algorithm for computing such an equilibrium under the discounted cost and the average cost criteria.

Mean-field games have been introduced by Huang, Malham\'{e}, and Caines \cite{HuMaCa06} and Lasry and Lions \cite{LaLi07} to establish the existence of approximate Nash equilibria for continuous-time differential games with a large number of agents interacting through a mean-field term. In continuous-time differential games, mean-field equilibrium can be obtained by solving a Fokker-Planck (FP) equation evolving forward in time and a Hamilton-Jacobi-Bellman (HJB) equation evolving backward in time. We refer the reader to
 \cite{HuCaMa07,TeZhBa14,Hua10,BeFrPh13,Ca11,CaDe13,GoSa14,MoBa16} for studies of continuous-time mean-field games with different models and cost functions, such as games with major-minor players, risk-sensitive games, games with Markov jump parameters, and LQG games.

In continuous-time mean-field games, there is an extensive literature on numerical computation of mean-field equilibrium. In \cite{AcCa10}, authors develop finite-difference method to FP and HJB partial differential equations when the state space is two-dimensional torus. In \cite{AcCaCa13}, convergence of finite difference method for coupled FP and HJB equations is studied when Hamiltonian has a special structure and the state space is again two dimensional torus. Reference \cite{AcPo16} considers convergence of finite difference schemes to the weak solutions of coupled FP and HJB equations arising in mean field games. In \cite{AlFeGo16}, authors develop two numerical methods (variational and monotonic) for stationary mean field games when the state space is one dimensional torus. Reference \cite{GoSa18} studies continuous-time finite-state mean field games and develops numerical methods for such games satisfying monotonicity condition. In \cite{Gue12}, author establishes numerical methods for mean field games with quadratic costs. We refer the reader to the survey paper \cite{AcLa20} for comprehensive literature review on numerical aspects of continuous-time mean field games.

Although there is an extensive literature on continuous-time mean-field games, this is not so in the discrete-time setting. Existing studies mostly consider discrete (finite or countable) state games or linear games or games where the mean-field term only affects the cost functions; that is, the evolution of the states of the agents is independent. Reference \cite{GoMoSo10} considers a discrete-time mean-field game with a finite state space over a finite horizon. In \cite{AdJoWe15}, discrete-time mean-field game with countable state-space is studied subject to an infinite-horizon discounted cost criterion. References \cite{ElLiNi13,MoBa15,NoNa13,MoBa16-cdc} consider discrete-time mean-field games with linear state dynamics. Reference \cite{SaBaRaSIAM} considers a discrete-time mean-field game with Polish state and action spaces under the discounted cost optimality criteria. There are only three papers \cite{Bis15,Wie19,WiAl05} studying discrete-time mean-field games subject to the average cost optimality criteria. In \cite{WiAl05}, the authors consider a discrete set-up for average-cost mean-field games. In \cite{Bis15}, the author considers average-cost mean-field games with Polish state spaces. In that paper, it was assumed that, for the finite agent game problem, the dynamics of the agents do not depend on the mean-field term. Under strong conditions on system components, \cite{Bis15} establishes the existence of  Nash equilibria for finite-agent games, and then, shows that these Nash equilibria converge to the mean-field equilibria in the infinite-population limit. Reference \cite{Wie19} considers average-cost mean-field games with compact state spaces. 

We note that the aforementioned papers, except linear models, mostly identify the existence of mean-field equilibrium and no algorithm with convergence guarantee has been proposed to compute this mean-field equilibrium in these works. The only work that establish the computation of mean-field equilibrium in discrete-time setup is \cite{HaSi19}. In this work, to compute mean-field equilibrium, authors develop fictitious play iteration for the finite-state mean-field games under the finite-horizon cost criterion. The convergence of the proposed algorithm is established under monotonicity condition, which is in general imposed to ensure the uniqueness of mean-field equilibrium and is quite restrictive. The studies that consider abstract state spaces (non-discrete) have only established the existence of mean-field equilibrium and no algorithm with convergence guarantee has been proposed to compute this mean-field equilibrium. Our work appears to be the first one that studies this problem for mean-field games with abstract state spaces. Additionally, in this work, we both consider infinite-horizon discounted cost and average cost optimality criteria, which have not been studied previously in the discrete-time mean-field game literature for computational purposes. Finally, it is known that analysis of continuous-time and discrete-time setups are quite different, requiring different set of tools. Therefore, it is unlikely to apply methods reviewed above for continuous-time setup to discrete-time case. Moreover, numerical methods developed for continuous-time mean-field games in general assume special state form like one or two dimensional torus. Here, we consider mean-field games with arbitrary Polish state spaces.

The paper is organized as follows. In Section~\ref{sec3}, we introduce the infinite population mean-field game and define the mean-field equilibrium. In Section~\ref{sec2}, we formulate the finite-agent version of the game problem. In Section~\ref{main-proof}, we propose the value iteration algorithm for both the discounted cost and the average cost optimality criteria. In Section~\ref{discount} we prove the convergence of the value iteration algorithm for discounted cost. In Section~\ref{average} we prove the convergence of the value iteration algorithm for average cost. Section~\ref{conc} concludes the paper.

\smallskip

\noindent\textbf{Notation.} 
For a metric space $\sE$, we let $\P(\sE)$ denote the set of all Borel probability measures on $\sE$. A sequence $\{\mu_n\}$ of measures on $\sE$ is said to converge weakly to a measure $\mu$ if $\int_{\sE} g(e) \mu_n(de)\rightarrow\int_{\sE} g(e) \mu(de)$ for all $g: \sE \rightarrow \R$ that are bounded and continuous. The set of probability measures $\P(\sE)$ is endowed with the Borel $\sigma$-algebra induced by weak convergence. The notation $v\sim \nu$ means that the random element $v$ has distribution $\nu$. Unless  specified otherwise, the term ``measurable" will refer to Borel measurability. 

\section{Mean-field games and mean-field equilibria}\label{sec3}

A discrete-time mean-field game is specified by
\begin{align}
\bigl( \sX, \sA, p, c, \mu_0 \bigr), \nonumber
\end{align}
where $\sX$ and $\sA$ are the state and the action spaces, respectively. Here, $\sX$ is a Polish space (complete separable metric space) with the metric $d_{\sX}$ and $\sA$ is a compact subset of a finite dimensional Euclidean space $\R^d$ with the Euclidean distance norm $\|\cdot\|$. The measurable function $p : \sX \times \sA \times \P(\sX) \to \P(\sX)$ denotes the transition probability of the next state given the previous state-action pair and the state-measure. The measurable function $c: \sX \times \sA \times \P(\sX) \rightarrow [0,\infty)$ is the one-stage cost function. The measure $\mu_0$ is the initial state distribution.

In this model, a policy $\pi$ is a stochastic kernel on $\sA$ given $\sX$; that is, $\pi:\sX \rightarrow \P(\sA)$ is a measurable function. Let $\Pi$ denote the set of all policies. By the Ionescu Tulcea Theorem \cite{HeLa96}, a policy $\pi$ and an initial measure $\mu_0$ define a unique probability measure $P^{\pi}$ on $(\sX \times \sA)^{\infty}$. The expectation with respect to $P^{\pi}$ is denoted by $E^{\pi}$.

It is important to note that a mean-field game is neither a game nor a stochastic control problem in the strict sense. We have a single agent that tries to minimize an objective function as in stochastic control problems, but this agent should also compete with the constraint on the state distribution at each time step as in game problems. More precisely, we have a single agent and we model the collective behavior of (a large population of) other agents by an exogenous \textit{state-measure} $\mu \in \P(\sX)$. This measure $\mu$ should also be aligned with the state distribution of this single agent when the agent applies its optimal policy. The precise mathematical description of the problem is given as follows.

Let us fix a state-measure $\mu \in \P(\sX)$ that describes the collective behavior of the other agents. A policy $\pi^{*} \in \Pi$ is optimal for $\mu$ if
\begin{align}
W_{\mu}(\pi^{*}) = \inf_{\pi \in \Pi} W_{\mu}(\pi), \nonumber
\end{align}
where $W \in \{J,V\}$ and
\begin{align}
J_{\mu}(\pi) &= E^{\pi}\biggl[ \sum_{t=0}^{\infty} \beta^t c(x(t),a(t),\mu) \biggr], \nonumber \\
V_{\mu}(\pi) &= \limsup_{T \rightarrow \infty} \frac{1}{T} E^{\pi}\biggl[ \sum_{t=0}^{T-1} c(x(t),a(t),\mu) \biggr], \nonumber 
\end{align}
are the discounted cost and the average cost of policy $\pi$ under the state-measure $\mu$, respectively. Here, $\beta \in (0,1)$ is the discount factor. In this model, the evolution of the states and actions is given by
\begin{align}
x(0) &\sim \mu_0, \,\,\,
x(t) \sim p(\,\cdot\,|x(t-1),a(t-1),\mu), \text{ } t\geq1, \nonumber \\
a(t) &\sim \pi(\,\cdot\,|x(t)), \text{ } t\geq0. \nonumber
\end{align}
Now, define the set-valued mapping $\Psi : \P(\sX) \rightarrow 2^{\Pi}$  as 
$$\Psi(\mu) = \{\pi \in \Pi: \pi \text{ is optimal for }  \mu \text{ }\text{ and } \text{ } \mu_0 = \mu\};$$
that is, given $\mu$, the set $\Psi(\mu)$ is the set of optimal policies for $\mu$ when the initial distribution is $\mu$ as well.  

Conversely, we define another set-valued mapping $\Lambda : \Pi \to 2^{\P(\sX)}$ as follows: given $\pi \in \Pi$, the state-measure $\mu_{\pi}$ is in $\Lambda(\pi)$ if it is a fixed point of the equation
\begin{align}
\mu_{\pi}(\,\cdot\,) = \int_{\sX \times \sA} p(\,\cdot\,|x,a,\mu_{\pi})  \, \pi(da|x) \, \mu_{\pi}(dx). \nonumber
\end{align}
In other words, $\mu_{\pi}$ is the invariant distribution of the Markov transition probability $P(\cdot|x) = \int_{\sA} p(\,\cdot\,|x,a,\mu_{\pi}) \, \pi(da|x)$. Without any assumptions on the transition probability, it is possible to have $\Lambda(\pi) = \emptyset$ for some $\pi$. However, under Assumption~\ref{as1} below, $\Lambda(\pi)$ has an unique element for all $\pi$. Therefore, it is indeed a single-valued mapping. 

The notion of equilibrium for mean-field games is defined via these mappings $\Psi$, $\Lambda$ as follows.

\begin{definition}
A pair $(\pi_*,\mu_*) \in \Pi \times \P(\sX)$ is a \emph{mean-field equilibrium} if $\pi_* \in \Psi(\mu_*)$ and $\mu_* \in \Lambda(\pi_*)$. In other words, $\pi_*$ is an optimal policy given the state-measure $\mu_*$ and $\mu_*$ is the state distribution under the policy $\pi_*$. 
\end{definition}

In the literature, the existence of mean-field equilibria has been established for both the discounted cost \cite{SaBaRaSIAM} and the average cost \cite{Wie19}. In this paper, our goal is to develop a value iteration algorithm for computing a mean-field equilibrium. To that end, we will impose certain assumptions on the components of the mean-field game model. Before doing this, we need to give some definitions.

For any measurable function $u: \sX \times \sA \rightarrow \R$, let $u_{\min}(x) \coloneqq \inf_{a \in \sA} u(x,a)$ and $u_{\max}(x) \coloneqq \sup_{a \in \sA} u(x,a)$. Let $w:\sX \times \sA \rightarrow [1,\infty)$ be a continuous weight function. For any measurable $v: \sX \times \sA \rightarrow \R$, we define $w$-norm of $v$ as
$$
\|v\|_w \coloneqq \sup_{x,a} \frac{|v(x,a)|}{w(x,a)}. 
$$ 
For any measurable $u: \sX \rightarrow \R$, we define $w_{\max}$-norm of $u$ as
\begin{align}
\|v\|_{w_{\max}} &\coloneqq \sup_{x} \frac{|u(x)|}{w_{\max}(x)}. \nonumber
\end{align}
Let $B(\sX,K)$ be the set of real-valued measurable functions with $w_{\max}$-norm less than $K$. Let $C(\sX)$ be the set of real-valued continuous functions on $\sX$. For each $g \in C(\sX)$, let
\begin{align}
\|g\|_{\Lip} \coloneqq \sup_{(x,y)\in\sX\times\sX} \frac{|g(x)-g(y)|}{d_{\sX}(x,y)}. \nonumber
\end{align}
If $\|g\|_{\Lip}$ is finite, then $g$ is called Lipschitz continuous with Lipschitz constant $\|g\|_{\Lip}$. $\Lip(\sX)$ denotes the set of all Lipschitz continuous functions on $\sX$, i.e.,
\begin{align}
\Lip(\sX) \coloneqq \{g \in C(\sX): \|g\|_{\Lip} < \infty \} \nonumber
\end{align}
and $\Lip(\sX,K)$ denotes the set of all $g \in \Lip(\sX)$ with $\|g\|_{\Lip} \leq K$. For any $\mu, \nu \in \P(\sX)$, we denote by $C(\mu,\nu)$ the set of couplings between $\mu$ and $\nu$; that is, $\xi \in \P(\sX\times\sX)$ is an element of $C(\mu,\nu)$ if $\xi(\cdot \times \sX) = \mu(\cdot)$ and $\xi(\sX\times\cdot) = \nu(\cdot)$. The \emph{Wasserstein distance of order $1$} \cite[Definition 6.1]{Vil09} between two probability measures $\mu$ and $\nu$ over $\sX$ is defined as
$$
W_1(\mu,\nu) = \inf\left\{ \int_{\sX \times \sX} d_{\sX}(x,y) \, \xi(dx,dy): \xi \in C(\mu,\nu) \right\}. 
$$
By using Kantorovich-Rubinstein duality, we can also write Wasserstein distance of order $1$ \cite[p. 95]{Vil09} as follows:
\begin{align}
W_1(\mu,\nu) \coloneqq \sup \biggl\{\biggl|\int_{\sX} g d\mu - \int_{\sX} g d\nu\biggr|: g \in \Lip(\sX,1)\biggr\}. \nonumber
\end{align}
For compact $\sX$, the Wasserstein distance of order $1$ metrizes the weak topology on the set of probability measures $\P(\sX)$ (see \cite[Corollary 6.13, p. 97]{Vil09}). However, in general, it is stronger than weak topology. 

Finally, we define $F: \sX \times \Lip(\sX) \times \P(\sX) \times \sA \rightarrow \R$ as
$$
F: \sX \times \Lip(\sX) \times \P(\sX) \times \sA \ni (x,v,\mu,a) \mapsto 
c(x,a,\mu) + \xi \int_{\sX} v(y) \, p(dy|x,a,\mu) \in \R,
$$
where $\xi = \beta$ if the objective function is the discounted cost and $\xi = 1$ if the objective function is the average cost. We may now state our assumptions.

\begin{assumption}
\label{as1}
\begin{itemize}
\item [ ]
\item [(a)] The one-stage cost function $c$ is continuous. Moreover, it satisfies the following Lipschitz bounds:  
\begin{align}
\|c(\cdot,\cdot,\mu) - c(\cdot,\cdot,\hat{\mu})\|_w &\leq L_1 \, W_1(\mu,\hat{\mu}), \text{ } \text{ } \forall \mu, \hat{\mu}, \nonumber \\
\sup_{(a,\mu) \in \sA \times \P(\sX)} |c(x,a,\mu) - c(\hat{x},a,\mu)| &\leq L_2 \, d_{\sX}(x,\hat{x}), \text{ } \text{ } \forall x, \hat{x}. \nonumber
\end{align}
\item [(b)] The stochastic kernel $p(\,\cdot\,|x,a,\mu)$ is weakly continuous in $(x,a,\mu)$. Moreover, it satisfies the following Lipschitz bounds:
\begin{align}
&\sup_{x \in \sX} W_1(p(\cdot|x,a,\mu) - p(\cdot|x,\hat{a},\hat{\mu})) \leq K_1 \, \left(\|a -\hat{a}\| + W_1(\mu,\hat{\mu})\right), \forall \mu, \hat{\mu},  \forall a, \hat{a}, \nonumber \\
&\sup_{\mu \in \P(\sX)} W_1(p(\cdot|x,a,\mu) - p(\cdot|\hat{x},\hat{a},\mu)) \leq K_2 \, \left( d_{\sX}(x,\hat{x}) + \|a-\hat{a}\| \right), \text{ } \text{ } \forall x, \hat{x}, \forall a, \hat{a}.\nonumber
\end{align}
\item [(c)] $\sA$ is convex.
\item [(d)] There exist nonnegative real numbers $M$ and $\alpha$ such that for each $(x,a,\mu) \in \sX \times \sA \times \P(\sX)$, we have
\begin{align}
c(x,a,\mu) &\leq M \, w(x,a), \nonumber  \\
\int_{\sX} w_{\max}(y) \, p(dy|x,a,\mu) &\leq \alpha \, w(x,a). \label{disc-drift}
\end{align}
\item [(e)] Let ${\cal F}$ be the set of non-negative functions in 
$$\Lip\left(\sX,\frac{L_2}{1-\xi \, K_2}\right) \bigcap B\left(\sX,\frac{M}{1-\xi \, \alpha}\right),$$ where $\xi = \beta$ if the objective function is the discounted cost and $\xi = 1$ if the objective function is the average cost. For any $v \in {\cal F}$, $\mu \in \P(\sX)$, and $x \in \sX$, $F(x,v,\mu,\cdot)$ is $\rho$-strongly convex; that is, $F(x,v,\mu,\cdot)$ is differentiable with the gradient $\nabla F(x,v,\mu,\cdot)$ and it satisfies 
$$
F(x,v,\mu,a)  \geq F(x,v,\mu,\hat{a}) + \nabla F(x,v,\mu,\hat{a})^T \cdot (a-\hat{a}) + \frac{\rho}{2} \, \|a-\hat{a}\|^2,  
$$ 
for some $\rho > 0$ and for all $a,\hat{a} \in \sA$. Moreover, the gradient $\nabla F(x,v,\mu,a): \sX \times {\cal F} \times \P(\sX) \times \sA \rightarrow \R^d$ satisfies the following Lipschitz bound:
\begin{align}
\hspace{-20pt}\sup_{a \in \sA} \|\nabla F(x,v,\mu,a) - \nabla F(\hat{x},\hat{v},\hat{\mu},a)\| \leq K_F \, \left(d_{\sX}(x,\hat{x})+\|v-\hat{v}\|_{w_{\max}} + W_1(\mu,\hat{\mu})\right), \nonumber  \nonumber
\end{align}
for every $x, \hat{x}, v, \hat{v}, \mu,$ and $\hat{\mu}$.
\end{itemize}
\end{assumption}

Note that condition (d) is a standard assumption in the study of stochastic control problems with unbounded one-stage cost functions \cite{HeLa99}. Conditions (a) and (b) are required in order to control the effect of the state-measure $\mu$ on the value functions through the one-stage cost function and the state transition probability. Condition (e) is imposed to control the effect of the state-measure $\mu$ on the optimal policy. Indeed, this condition is equivalent to the canonical assumption that guarantees Lipschitz continuity, with respect to unknown parameters, of the optimal solutions of the convex optimization problem  \cite[Theorem 4.51]{BoSh00}.

\subsection{Finite Player Game}\label{sec2}

The model introduced in the previous section is actually the infinite-population limit of the finite-population game model that we describe below.

In this model, we have $N$-agents with state space $\sX$ and action space $\sA$. For every $t \in \{0,1,2,\ldots\}$ and every $i \in \{1,2,\ldots,N\}$, let $x^N_i(t) \in \sX$ and $a^N_i(t) \in \sA$ denote the state and the action of Agent~$i$ at time $t$, and 
\begin{align}
e_t^{(N)}(\,\cdot\,) \coloneqq \frac{1}{N} \sum_{i=1}^N \delta_{x_i^N(t)}(\,\cdot\,) \in \P(\sX) \nonumber
\end{align}
denote the empirical distribution of the state configuration at time $t$, where $\delta_x\in\P(\sX)$ is the Dirac measure at $x$. The initial states $x^N_i(0)$ are independent and identically distributed according to $\mu_0$, and, for each $t \ge 0$, the next-states $(x^N_1(t+1),\ldots,x^N_N(t+1))$ are generated according to the probability distribution
\begin{align}
&\prod^N_{i=1} p\big(dx^N_i(t+1)\big|x^N_i(t),a^N_i(t),e^{(N)}_t\big). \nonumber 
\end{align}
A \emph{policy} $\pi$ for a generic agent is a stochastic kernel on $\sA$ given $\sX$. The set of all policies for Agent~$i$ is denoted by $\Pi_i$.

Let ${\bf \Pi}^{(N)} = \prod_{i=1}^N \Pi_i$. By ${\boldsymbol \pi}^{(N)} \coloneqq (\pi^1,\ldots,\pi^N)$, $\pi^i \in \Pi_i$, we denote an $N$-tuple of policies for all the agents in the game. Under such an $N$-tuple of policies, the actions at each time $t \ge 0$ are generated according to the probability distribution
\begin{align}\label{eq:policy_spec}
\prod^N_{i=1} \pi^i_t\big(da^N_i(t)\big|x^N_i(t)\big).
\end{align}

For Agent~$i$, the discounted cost and the average cost under the initial distribution $\mu_0$ and the $N$-tuple of policies ${\boldsymbol \pi}^{(N)} \in {\bf \Pi}^{(N)}$ are respectively given by
\begin{align}
J_i^{(N)}({\boldsymbol \pi}^{(N)}) &= E^{{\boldsymbol \pi}^{(N)}}\biggl[\sum_{t=0}^{\infty}\beta^{t}c(x_{i}^N(t),a_{i}^N(t),e^{(N)}_t)\biggr], \nonumber \\
V_i^{(N)}({\boldsymbol \pi}^{(N)}) &= \limsup_{T \rightarrow \infty} \frac{1}{T} E^{{\boldsymbol \pi}^{(N)}}\biggl[\sum_{t=0}^{T-1} c(x_{i}^N(t),a_{i}^N(t),e^{(N)}_t)\biggr].\nonumber
\end{align}

Using these definitions, the Nash equilibrium is defined for this game model as follows.

\begin{definition}
A policy ${\boldsymbol \pi}^{(N*)}= (\pi^{1*},\ldots,\pi^{N*})$ constitutes a \emph{Nash equilibrium} if
\begin{align}
W_i^{(N)}({\boldsymbol \pi}^{(N*)}) = \inf_{\pi^i \in \Pi_i} W_i^{(N)}({\boldsymbol \pi}^{(N*)}_{-i},\pi^i) \nonumber
\end{align}
for each $i=1,\ldots,N$, where ${\boldsymbol \pi}^{(N*)}_{-i} \coloneqq (\pi^{j*})_{j\neq i}$ and $W \in \{J,V\}$.
\end{definition}

For this game model, it is in general prohibitive to even prove the existence of Nash equilibria due to the (almost) decentralized nature of the information structure of the problem and the large number of players. However, when the number of players are sufficiently large, one way to overcome this challenge is to introduce the infinite-population limit $N\rightarrow\infty$ of the game (i.e., mean-field game). In this limiting case, we can model the empirical distribution of the state configuration as an exogenous state-measure, which should be consistent with the distribution of a generic agent by the law of large numbers (i.e., mean-field equilibrium). Hence, in the limiting case, a generic agent is exactly faced with a mean-field game that is introduced in the preceding section. One can then prove that if each agent in the finite-agent $N$ game problem adopts the mean-field equilibrium policy, the resulting policy will be an approximate Nash equilibrium for all sufficiently large $N$ (see, \cite[Theorem 4.1]{SaBaRaSIAM}, \cite[Section 5]{Wie19}). Therefore, by studying the infinite-population limit, which is easier to handle, one can obtain an approximate Nash equilibrium for the original finite-agent game problem for which establishing the existence of an exact Nash equilibrium is very difficult.


\section{Value Iteration Algorithm}\label{main-proof}

Note that, given any state-measure $\mu \in \P(\sX)$, the optimal control problem for the mean-field game reduces to finding an optimal policy for a Markov decision process (MDP).
Since an optimal value and an optimal policy can be computed via a value iteration algorithm in MDPs, it is possible to develop a similar value iteration algorithm for computing the mean-field equilibrium. In this section, we develop such algorithms for both the discounted cost and the average cost optimality criteria using $Q$-functions.

\subsection{Discounted Cost}\label{discount}

In this section, we first state the value iteration algorithm for computing mean-field equilibrium and then establish the convergence of this algorithm. To that end, in addition to Assumption~\ref{as1}, we assume the following.

\begin{assumption}\label{as2}
\begin{itemize}
\item [ ]
\item [(a)] We assume that 
$$\hspace{-25pt}k \coloneqq \max\left\{\beta\,\alpha + \frac{K_F}{\rho}  K_1 , L_1+\beta \frac{L_2}{1-\beta\,K_2} K_1 + \bigg(\frac{K_F}{\rho} +1\bigg) K_1+ K_2 + \frac{K_F}{\rho}\right\} < 1.$$ 
\end{itemize}
\end{assumption}

For any state-measure $\mu$, we define the optimal value function of the optimal stochastic control problem by
$$
J_{\mu}^*(x) \coloneqq \inf_{\pi \in \Pi} E^{\pi}\biggl[ \sum_{t=0}^{\infty} \beta^t c(x(t),a(t),\mu) \, \bigg| \, x(0) = x \biggr]. 
$$
The following characterization of optimal policies is a known result in the theory of Markov Decision Processes (see \cite[Chapter 4]{HeLa96} and \cite[Chapter 8]{HeLa99}). First of all, the optimal value function $J_{\mu}^*(x)$ is the unique fixed point of the Bellman optimality operator $T_{\mu}$, which is $\beta \alpha$-contractive with respect to $w_{\max}$-norm; that is,
$$
J_{\mu}^*(x) = \min_{a \in \sA} \bigg[c(x,a,\mu) + \beta \int_{\sX} J_{\mu}^*(y) \, p(dy|x,a,\mu) \bigg] \eqqcolon T_{\mu}J_{\mu}^*(x).
$$
Moreover, if the mapping $f^*: \sX \rightarrow \sA$ attains the minimum in equation above; that is,
\begin{align}
&\min_{a \in \sA} \bigg[c(x,a,\mu) + \beta \int_{\sX} J_{\mu}^*(y) \, p(dy|x,a,\mu) \bigg] \nonumber \\
&\phantom{xxxxxxxxxxxxxxxx}= c(x,f^*(x),\mu) + \beta \int_{\sX} J_{\mu}^*(y) \, p(dy|x,f^*(x),\mu), \label{optim}
\end{align}
then the policy $\pi^*(a|x) = \delta_{f^*(x)}(a)$ is optimal. In the classical value iteration algorithm, the idea is to compute $J_{\mu}^*(x)$ by iteratively applying the Bellman optimality operator $T_{\mu}$. Then, an optimal policy can be obtained by using $J_{\mu}^*(x)$ and the Bellman optimality equation (\ref{optim}). We can also establish the same result by using $Q$-functions instead of value functions. Indeed, let us define the optimal $Q$-function as
$$
Q_{\mu}^*(x,a) = c(x,a,\mu) + \beta \int_{\sX} J_{\mu}^*(y) \, p(dy|x,a,\mu). 
$$
 Since $Q_{\mu,\min}^* = J_{\mu}^*$, we can re-write the equation above 
$$
Q_{\mu}^*(x,a) = c(x,a,\mu) + \beta \int_{\sX} Q_{\mu,\min}^*(y) \, p(dy|x,a,\mu)  \eqqcolon H_{\mu}Q_{\mu}^*(x,a), 
$$
where $H_{\mu}$ is the Bellman optimality operator for $Q$-functions. One can prove that $H_{\mu}$ is a contraction with modulus $\beta\alpha$ and the unique fixed point of $H_{\mu}$ is $Q^*$. Hence, we can develop a value iteration algorithm to compute $Q^*$, and using $Q^*$ we can obtain the optimal policy. The advantage of this algorithm is that one can adapt this algorithm to the model-free setting via $Q$-learning. Therefore, in the remainder of this paper, we will develop value iteration algorithms using $Q$-functions instead value functions.

Let us first define the set on which the $Q$-functions live: 
\begin{align}
&{\cal C} \coloneqq \bigg\{Q:\sX \times \sA \rightarrow [0,\infty); \, \|Q\|_{w} \leq \frac{M}{1-\beta \, \alpha} \text{ } \text{and} \text{ } \|Q_{\min}\|_{\Lip} \leq \frac{L_2}{1-\beta \, K_2} \bigg\},\nonumber 
\end{align}
where $\beta \in (0,1)$ is the discount factor. For any $(x,Q,\mu) \in \sX \times {\cal C} \times \P(\sX)$, by Assumption~\ref{as1}-(e), there exists a unique minimizer $f(x,Q,\mu)$ of
$$
c(x,a,\mu) + \beta \, \int_{\sX} Q_{\min}(y) \, p(dy|x,a,\mu) = F(x,Q_{\min},\mu,a).
$$
Moreover, this unique minimizer $f(x,Q,\mu)$ makes the gradient of $F(x,Q_{\min},\mu,a)$ (with respect to $a$) zero; that is,
$$
\nabla \,F(x,Q_{\min},\mu,f(x,Q,\mu)) = 0.  
$$

Now, we define the mean-field equilibrium (MFE) operator as follows:
$$H: {\cal C} \times \P(\sX) \ni (Q,\mu) \mapsto \left(H_1(Q,\mu),H_2(Q,\mu)\right) \in {\cal C} \times \P(\sX),$$ 
where
\begin{align}
H_1(Q,\mu)(x,a) &\coloneqq c(x,a,\mu) + \beta \, \int_{\sX} Q_{\min}(y) \, p(dy|x,a,\mu) \nonumber \\
H_2(Q,\mu)(\cdot) &\coloneqq \int_{\sX \times \sA} \hspace{-10pt} p(\cdot|x,f(x,Q,\mu),\mu) \, \mu(dx). \nonumber
\end{align}
Here, $f(x,Q,\mu) \in \sA$ is the unique minimizer of
\begin{align}
H_1(Q,\mu)(x,a) \coloneqq c(x,a,\mu) + \beta \, \int_{\sX} Q_{\min}(y) \, p(dy|x,a,\mu).\nonumber
\end{align}
We first prove that $H$ is well defined. 

\begin{lemma}
$H$ maps ${\cal C} \times \P(\sX)$ into itself.
\end{lemma}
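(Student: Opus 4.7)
The plan is to verify the two defining conditions of $\mathcal{C}$ for $H_1(Q,\mu)$ and to check that $H_2(Q,\mu)$ is a Borel probability measure. The second part is essentially immediate: $H_2(Q,\mu)$ is the marginal of the joint measure $\mu(dx)\,p(dy|x,f(x,Q,\mu),\mu)$, which is a valid Borel probability measure on $\sX$ once one confirms measurability of $x \mapsto f(x,Q,\mu)$; the latter follows from the uniqueness of the minimizer guaranteed by the $\rho$-strong convexity in Assumption~\ref{as1}(e), together with a standard measurable-selection argument.

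The main work is on $H_1$. Nonnegativity is inherited directly from $c \geq 0$ and $Q_{\min} \geq 0$ (since $Q\in\mathcal{C}$ is nonnegative). For the $w$-norm bound, first I would observe that for any $Q\in\mathcal{C}$,
\[
Q_{\min}(x) = \inf_{a\in\sA} Q(x,a) \leq \|Q\|_w\, w_{\max}(x),
\]
so $\|Q_{\min}\|_{w_{\max}} \leq \|Q\|_w \leq \frac{M}{1-\beta\alpha}$. Then combining Assumption~\ref{as1}(d) with the drift condition \eqref{disc-drift}:
\[
H_1(Q,\mu)(x,a) \leq M\,w(x,a) + \beta\,\|Q_{\min}\|_{w_{\max}} \int_{\sX} w_{\max}(y)\,p(dy|x,a,\mu) \leq \Big(M + \beta\alpha\,\tfrac{M}{1-\beta\alpha}\Big) w(x,a),
\]
which collapses to $\tfrac{M}{1-\beta\alpha}\,w(x,a)$ as required.

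For the Lipschitz bound on $H_1(Q,\mu)_{\min}$, I would use the elementary inequality $|\inf_a g(x,a) - \inf_a g(\hat x,a)| \leq \sup_a |g(x,a)-g(\hat x,a)|$ to reduce to bounding $H_1(Q,\mu)(x,a) - H_1(Q,\mu)(\hat x,a)$ uniformly in $a$. The cost piece contributes $L_2\,d_{\sX}(x,\hat x)$ by Assumption~\ref{as1}(a). For the integral piece, since $Q_{\min}\in\Lip(\sX,\tfrac{L_2}{1-\beta K_2})$ by the definition of $\mathcal{C}$, the Kantorovich--Rubinstein dual form of $W_1$ gives
\[
\Big|\int_{\sX} Q_{\min}\,dp(\cdot|x,a,\mu) - \int_{\sX} Q_{\min}\,dp(\cdot|\hat x,a,\mu)\Big| \leq \tfrac{L_2}{1-\beta K_2}\,W_1\!\left(p(\cdot|x,a,\mu),p(\cdot|\hat x,a,\mu)\right),
\]
and Assumption~\ref{as1}(b) bounds this Wasserstein distance by $K_2\,d_{\sX}(x,\hat x)$. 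Adding the two contributions yields the geometric-series telescoping $L_2 + \beta K_2 \cdot \tfrac{L_2}{1-\beta K_2} = \tfrac{L_2}{1-\beta K_2}$, which is precisely the Lipschitz constant required for membership in $\mathcal{C}$.

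The only genuinely delicate point in this plan is handling the infimum in $Q_{\min}$ when extracting Lipschitz regularity; everything else is bookkeeping with the drift and Lipschitz constants. The bound $\|Q_{\min}\|_{w_{\max}} \leq \|Q\|_w$ and the $\sup_a$-trick for the Lipschitz difference both circumvent this issue, so the proof reduces to straightforward application of Assumption~\ref{as1}(a),(b),(d) and the definition of $\mathcal{C}$.
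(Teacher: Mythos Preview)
Your proposal is correct and follows essentially the same route as the paper's own proof: the $w$-norm bound via Assumption~\ref{as1}(d) and the drift inequality together with $\|Q_{\min}\|_{w_{\max}}\le\|Q\|_w$, and the Lipschitz bound on $H_1(Q,\mu)_{\min}$ via the $\sup_a$-trick, Assumption~\ref{as1}(a),(b), and the Kantorovich--Rubinstein duality applied to $Q_{\min}$. The only differences are that you are slightly more explicit about nonnegativity and about the measurability of $x\mapsto f(x,Q,\mu)$, both of which the paper leaves implicit.
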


\begin{proof}
It is clear that $H_2(Q,\mu) \in \P(\sX)$. Hence, we need to prove that $H_1(Q,\mu) \in {\cal C}$. Let $(Q,\mu) \in {\cal C} \times \P(\sX)$. Then, we have
\begin{align}
\sup_{x,a} \frac{\left| H_1(Q,\mu)(x,a) \right|} {w(x,a)} = &\sup_{x,a} \frac{\left| c(x,a,\mu) + \beta \, \int_{\sX} Q_{\min}(y) \, p(dy|x,a,\mu) \right|} {w(x,a)} \nonumber \\ 
&\leq \sup_{x,a} \frac{\left| c(x,a,\mu) \right|} {w(x,a)} + \beta \, \sup_{x,a} \frac{\left|\int_{\sX} Q_{\min}(y) \, p(dy|x,a,\mu) \right|} {w(x,a)} \nonumber \\
&\leq M + \beta \, \|Q_{\min}\|_{w_{\max}} \, \sup_{x,a} \frac{\left|\int_{\sX} w_{\max}(y) \, p(dy|x,a,\mu) \right|} {w(x,a)}\nonumber \\
&\overset{(1)}{\leq} M+ \beta \, \alpha \,  \|Q\|_{w} \nonumber \\
&\leq M + \beta \, \alpha \, \frac{M}{1-\beta \, \alpha} = \frac{M}{1-\beta \, \alpha}, \nonumber
\end{align}
where (1) follows from Assumption~\ref{as1}-(d) and $\|Q_{\min}\|_{w_{\max}} \leq \|Q\|_w$.
Moreover, for any $x, \hat{x} \in \sX$, we have 
\begin{align}
&|H_1(Q,\mu)_{\min}(x) - H_1(Q,\mu)_{\min}(\hat{x})| \nonumber \\
&=\bigg| \min_{a \in \sA} \bigg[c(x,a,\mu) + \beta \, \int_{\sX} Q_{\min}(y) \, p(dy|x,a,\mu)\bigg] \nonumber \\
&\phantom{xxxxxxxxxxxxx}- \min_{a \in \sA} \bigg[c(\hat{x},a,\mu) + \beta \, \int_{\sX} Q_{\min}(y) \, p(dy|\hat{x},a,\mu)\bigg] \bigg| \nonumber \\
&\leq \sup_{a \in \sA} |c(x,a,\mu) - c(\hat{x},a,\mu)| \nonumber \\
&\phantom{xxxxxxxxxxxxxxxxx}+ \beta \, \sup_{a \in \sA} \left|
\int_{\sX} Q_{\min}(y) \, p(dy|x,a,\mu) - \int_{\sX} Q_{\min}(y) \, p(dy|\hat{x},a,\mu) \right| \nonumber \\
&\overset{(1)}{\leq} L_2 \, d_{\sX}(x,\hat{x}) + \beta \, K_2 \, \|Q_{\min}\|_{\Lip} \, d_{\sX}(x,\hat{x}) \nonumber \\
&\leq \frac{L_2}{1-\beta \, K_2} \, d_{\sX}(x,\hat{x}), \nonumber
\end{align}
where (1) follows from Assumption~\ref{as1}-(a),(b). This implies that $H_1(Q,\mu) \in {\cal C}$ which completes the proof.
\end{proof}
 
Hence, the MFE-operator $H$ is well-defined. Our next goal is to prove that $H$ is a contraction operator. Using this result, we will introduce a value iteration algorithm that will give a mean-field equilibrium.  

\begin{theorem}\label{disc-local}
The mapping $H: {\cal C} \times \P(\sX) \rightarrow  {\cal C} \times \P(\sX)$  is a contraction with constant $k$, where $k$ is the constant in Assumption~\ref{as2}. 
\end{theorem}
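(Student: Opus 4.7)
My plan is to equip $\mathcal{C}\times\mathcal{P}(\sX)$ with the sum metric $d((Q,\mu),(\hat Q,\hat\mu))\coloneqq \|Q-\hat Q\|_w + W_1(\mu,\hat\mu)$, estimate $\|H_1(Q,\mu)-H_1(\hat Q,\hat\mu)\|_w$ and $W_1(H_2(Q,\mu),H_2(\hat Q,\hat\mu))$ separately as affine functions of $\|Q-\hat Q\|_w$ and $W_1(\mu,\hat\mu)$, and then take the maximum of the two coefficient sums. I expect these two maxima to match precisely the two terms inside the $\max$ in Assumption~\ref{as2}, yielding the constant $k$.

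For the $H_1$ estimate, adding and subtracting $c(x,a,\hat\mu)+\beta\int Q_{\min}\,dp(\cdot|x,a,\hat\mu)$ splits the difference into three pieces. The cost-variation piece $c(x,a,\mu)-c(x,a,\hat\mu)$ is controlled in $w$-norm by Assumption~\ref{as1}(a), producing $L_1 W_1(\mu,\hat\mu)$. The difference $\beta\int(Q_{\min}-\hat Q_{\min})\,dp(\cdot|x,a,\hat\mu)$ is controlled via the drift bound in Assumption~\ref{as1}(d) together with $\|Q_{\min}\|_{w_{\max}}\le\|Q\|_w$, producing $\beta\alpha\|Q-\hat Q\|_w$. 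The kernel-variation piece $\beta\int Q_{\min}\,[dp(\cdot|x,a,\mu)-dp(\cdot|x,a,\hat\mu)]$ is controlled by Kantorovich--Rubinstein duality applied to $Q_{\min}\in\Lip(\sX,L_2/(1-\beta K_2))$ (built into the definition of $\mathcal{C}$) together with the first bound of Assumption~\ref{as1}(b), producing $\beta\frac{L_2}{1-\beta K_2}K_1\,W_1(\mu,\hat\mu)$. Summing yields
\[
\|H_1(Q,\mu)-H_1(\hat Q,\hat\mu)\|_w \le \beta\alpha\|Q-\hat Q\|_w + \bigl(L_1 + \beta\tfrac{L_2}{1-\beta K_2}K_1\bigr)\,W_1(\mu,\hat\mu).
\]

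The $H_2$ estimate requires first the Lipschitz bound
\[
\|f(x,Q,\mu)-f(\hat x,\hat Q,\hat\mu)\| \le \tfrac{K_F}{\rho}\bigl(d_{\sX}(x,\hat x)+\|Q-\hat Q\|_w+W_1(\mu,\hat\mu)\bigr),
\]
which I would prove by writing the first-order (variational) optimality conditions for the two minimizers of $F(x,Q_{\min},\mu,\cdot)$ and $F(\hat x,\hat Q_{\min},\hat\mu,\cdot)$, adding them, extracting a $\rho\|f-\hat f\|^2$ term from the $\rho$-strong convexity in Assumption~\ref{as1}(e), and absorbing the residual inner product via the $K_F$-Lipschitz bound on $\nabla F$; this is the content of \cite[Theorem 4.51]{BoSh00}. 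With this in hand, Kantorovich--Rubinstein duality and the decomposition
\[
\int g\,d[H_2(Q,\mu)-H_2(\hat Q,\hat\mu)] = \int[\phi(x)-\hat\phi(x)]\,\mu(dx) + \int \hat\phi\,d(\mu-\hat\mu),
\]
with $\phi(x)\coloneqq\int g\,dp(\cdot|x,f(x,Q,\mu),\mu)$ and $\hat\phi(x)\coloneqq\int g\,dp(\cdot|x,f(x,\hat Q,\hat\mu),\hat\mu)$, reduces the problem to two sub-estimates. The first is bounded uniformly in $x$ by the first bound of Assumption~\ref{as1}(b) and the Lipschitz estimate on $f$, contributing $\tfrac{K_F}{\rho}K_1\|Q-\hat Q\|_w + (\tfrac{K_F}{\rho}+1)K_1\,W_1(\mu,\hat\mu)$. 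The second is bounded by $\|\hat\phi\|_{\Lip}\,W_1(\mu,\hat\mu)$, and a triangle-split of $W_1(p(\cdot|x,f(x,\hat Q,\hat\mu),\hat\mu),p(\cdot|\hat x,f(\hat x,\hat Q,\hat\mu),\hat\mu))$ into an $x$-variation and an action-variation, together with Assumption~\ref{as1}(b) and the $f$ Lipschitz bound, gives the coefficient $K_2+\tfrac{K_F}{\rho}$.

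Adding the $H_1$ and $H_2$ bounds and taking the maximum of the two coefficient sums reproduces the constant $k$ of Assumption~\ref{as2}. The hard part will be the Lipschitz estimate on the selector $f$: without the strong-convexity and Lipschitz-gradient hypothesis of Assumption~\ref{as1}(e), $f$ may be set-valued or discontinuous and the Wasserstein bound on $H_2$ collapses. The remainder is a systematic bookkeeping of triangle-inequality decompositions in the state, action, and measure arguments of the kernel, combined with the Lipschitz bounds in Assumption~\ref{as1}(a)--(b) and the definition of $\mathcal{C}$.
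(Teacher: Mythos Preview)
Your proposal is correct and follows essentially the same approach as the paper's proof: the same sum metric, the same three-term split for $H_1$, the same strong-convexity/Lipschitz-gradient perturbation argument for the selector $f$, and the same two-term decomposition of the $H_2$ difference, yielding exactly the two coefficient sums whose maximum is $k$. The only cosmetic difference is that for the term $\int \hat\phi\,d(\mu-\hat\mu)$ the paper builds an explicit coupling between $\mu P$ and $\hat\mu P$ to obtain the factor $K_2+\tfrac{K_F}{\rho}$, whereas you bound $\|\hat\phi\|_{\Lip}$ directly and invoke Kantorovich--Rubinstein duality; these are the primal and dual versions of the same estimate.
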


\begin{proof}
Fix any $(Q,\mu)$ and $(\hQ,\hmu)$ in ${\cal C} \times \P(\sX)$. First, we analyse the distance between $H_1(Q,\mu)$ and $H_1(\hQ,\hmu)$:  
\begin{align}
&\|H_1(Q,\mu) - H_1(\hQ,\hmu)\|_{w} \nonumber \\
&= \sup_{x,a} \frac{\bigg| c(x,a,\mu) + \beta \int_{\sX} Q_{\min}(y) \, p(dy|x,a,\mu) - c(x,a,\hmu) - \beta  \int_{\sX} \hQ_{\min}(y) \, p(dy|x,a,\hmu) \bigg|}{w(x,a)} \nonumber \\
&\leq  \sup_{x,a} \frac{\big| c(x,a,\mu) - c(x,a,\hmu) \big|}{w(x,a)} \nonumber \\
&\phantom{xxxxxxxxxx}+ \beta \sup_{x,a} \frac{\bigg| \int_{\sX} Q_{\min}(y) \, p(dy|x,a,\mu) -\int_{\sX} \hQ_{\min}(y) \, p(dy|x,a,\hmu) \bigg|}{w(x,a)} \nonumber \\
&\overset{(1)}{\leq} L_1 \, W_1(\mu,\hmu) \nonumber \\
&\phantom{xxxxxxxxxx}+ \beta \sup_{x,a} \frac{\bigg| \int_{\sX} Q_{\min}(y) \, p(dy|x,a,\mu) -\int_{\sX} \hQ_{\min}(y) \, p(dy|x,a,\mu) \bigg|}{w(x,a)} \nonumber \\
&\phantom{xxxxxxxxxx}+ \beta \sup_{x,a} \frac{\bigg| \int_{\sX} \hQ_{\min}(y) \, p(dy|x,a,\mu) -\int_{\sX} \hQ_{\min}(y) \, p(dy|x,a,\hmu) \bigg|}{w(x,a)} \nonumber \\
&\overset{(2)}{\leq} L_1 \, W_1(\mu,\hmu) \nonumber \\
&\phantom{xxxxxxxxxx}+ \beta \, \|Q_{\min} - \hQ_{\min}\|_{w_{\max}} \, \sup_{x,a} \frac{\int_{\sX} w_{\max}(y) \, p(dy|x,a,\mu)}{w(x,a)} \nonumber \\
&\phantom{xxxxxxxxxx}+ \beta \, \|\hQ_{\min}\|_{\Lip} \, \sup_{x,a} \frac{W_1(p(\cdot|x,a,\mu),p(\cdot|x,a,\hmu))}{w(x,a)} \nonumber \\
&\overset{(3)}{\leq} L_1 \, W_1(\mu,\hmu) + \beta \, \alpha \, \|Q - \hQ\|_{w} + \beta \frac{L_2}{1-\beta \, K_2} \, K_1 \, W_1(\mu,\hmu) \label{fin} 
\end{align}
where (1) follows from Assumption~\ref{as1}-(a), (2) follows from the fact that $\hQ_{\min} \in \Lip(\sX)$, and (3) follows from Assumption~\ref{as1}-(b),(d).

Next, we consider the distance between $H_2(Q,\mu)$ and $H_2(\hQ,\hmu)$. To that end, we will first make a perturbation analysis to obtain an upper bound on the difference between the unique minimizer $f(x,Q,\mu)$ of $H_1(Q,\mu)(x,a)$ and the unique minimizer $f(x,\hQ,\hmu)$ of $H_1(\hQ,\hmu)(x,a)$, with respect to $a$. Recall the function 
$$
F: \sX \times {\cal C} \times \P(\sX) \times \sA \ni (x,v,\mu,a) \mapsto 
c(x,a,\mu) + \beta \int_{\sX} v(y) \, p(dy|x,a,\mu) \in \R.
$$
Since it is $\rho$-strongly convex by Assumption~\ref{as1}-(e), it satisfies \cite[Lemma 3.2]{HaRa19}
\begin{align}\label{gradient}
\left[ \nabla F(x,v,\mu,a+r) - \nabla F(x,v,\mu,a) \right]^T \cdot r \geq \rho \, \|r\|^2, 
\end{align}
for any $a,r \in \sA$ and for any $x \in \sX$. Let us set 
$$a = f(x,Q,\mu)$$ and $$r = f(y,\hQ,\hmu) - f(x,Q,\mu).$$ As $a = f(x,Q,\mu)$ is the unique minimizer of a strongly convex function $F(x,Q_{\min},\mu,\cdot)$, we have 
$$
\nabla \, F\left(x,Q_{\min},\mu,f(x,Q,\mu)\right)=0. 
$$
The same is true for $a+r = f(y,\hQ,\hmu)$ and $F(y,\hQ_{\min},\hmu,\cdot)$. Therefore, by Assumption~\ref{as1}-(e) and (\ref{gradient}), we have
\begin{align}
-\nabla F(y,\hQ_{\min},\hmu,a)^T \cdot r &= -\nabla F(y,\hQ_{\min},\hmu,a)^T \cdot r + \nabla F(y,\hQ_{\min},\hmu,a+r)^T \cdot r \nonumber \\
&\geq \rho \, \|r\|^2. \label{st1}
\end{align}
Similarly, by Assumption~\ref{as1}-(e), we also have
\begin{align}
-\nabla F(y,\hQ_{\min},\hmu,a)^T \cdot r &= -\nabla F(y,\hQ_{\min},\hmu,a)^T \cdot r + \nabla F(x,Q_{\min},\mu,a)^T \cdot r \nonumber \\
&\leq \|r\| \, \|\nabla F(x,Q_{\min},\mu,a)-\nabla F(y,\hQ_{\min},\hmu,a)\| \nonumber \\
&\leq K_F \, \|r\| \left(d_{\sX}(x,y) + \|Q_{\min}-\hQ_{\min}\|_{w_{\max}} + W_1(\mu,\hmu)\right) \nonumber \\
&\leq K_F \, \|r\| \left(d_{\sX}(x,y) + \|Q-\hQ\|_{w} + W_1(\mu,\hmu)\right). \label{st2}
\end{align}
Combining (\ref{st1}) and (\ref{st2}) yields 
\begin{align}\label{perturbation}
\|f(y,\hQ,\hmu) - f(x,Q,\mu)\| \leq \frac{K_F}{\rho} \,\left(d_{\sX}(x,y) + \|Q-\hQ\|_w + W_1(\mu,\hmu)\right)
\end{align}

Now, we can start analysing the distance between $H_2(Q,\mu)$ and $H_2(\hQ,\hmu)$. To do that we use the dual formulation of the $W_1$ distance. Indeed, we have
\begin{align}
&W_1(H_2(Q,\mu),H_2(\hQ,\hmu)) \nonumber \\
&= \sup_{\|g\|_{\Lip}\leq1} \bigg| \int_{\sX \times \sA} \int_{\sX} g(y) \, p(dy|x,f(x,Q,\mu),\mu) \, \mu(dx) \nonumber \\
&\phantom{xxxxxxxxxxxxxx}- \int_{\sX \times \sA} \int_{\sX} g(y) \, p(dy|x,f(x,\hQ,\hmu),\hmu) \, \hmu(dx) \biggr| \nonumber \\
&\leq \sup_{\|g\|_{\Lip}\leq1} \bigg| \int_{\sX \times \sA} \int_{\sX} g(y) \, p(dy|x,f(x,Q,\mu),\mu) \, \mu(dx) \nonumber \\
&\phantom{xxxxxxxxxxxxxx}- \int_{\sX \times \sA} \int_{\sX} g(y) \, p(dy|x,f(x,\hQ,\hmu),\hmu) \, \mu(dx) \biggr| \nonumber \\
&+ \sup_{\|g\|_{\Lip}\leq1} \bigg| \int_{\sX \times \sA} \int_{\sX} g(y) \, p(dy|x,f(x,\hQ,\hmu),\hmu) \, \mu(dx) \nonumber \\
&\phantom{xxxxxxxxxxxxxx}- \int_{\sX \times \sA} \int_{\sX} g(y) \, p(dy|x,f(x,\hQ,\hmu),\hmu) \, \hmu(dx) \biggr| \nonumber \\
&\overset{(1)}{\leq} \int_{\sX \times \sA} \sup_{\|g\|_{\Lip}\leq1} \bigg| \int_{\sX} g(y) \, p(dy|x,f(x,Q,\mu),\mu) \, \mu(dx) \nonumber \\
&\phantom{xxxxxxxxxxxxxx}- \int_{\sX \times \sA} \int_{\sX} g(y) \, p(dy|x,f(x,\hQ,\hmu),\hmu) \biggr| \, \mu(dx)  \nonumber \\
&+ \left( K_2 + \frac{K_F}{\rho} \right) \, W_1(\mu,\hmu)\nonumber \\
&\leq \int_{\sX \times \sA} W_1\left(p(\cdot|x,f(x,Q,\mu),\mu),p(\cdot|x,f(x,\hQ,\hmu),\hmu)\right) \, \mu(dx) \nonumber \\
&+ \left( K_2 + \frac{K_F}{\rho} \right) \, W_1(\mu,\hmu) \nonumber \\
&\overset{(2)}{\leq} \frac{K_F}{\rho} K_1  \,\left(\|Q-\hQ\|_w + W_1(\mu,\hmu)\right)+ K_1 \, W_1(\mu,\hmu) + \left( K_2 + \frac{K_F}{\rho} \right) \, W_1(\mu,\hmu). \label{sin}
\end{align}
To show that (1) follows from Assumption~\ref{as1}-(b), let us define the Markov transition probability $P:\sX \rightarrow \P(\sX)$ as
$$
P(\cdot|x) \coloneqq p(\cdot|x,f(x,\hQ,\hmu),\hmu).
$$
Note that, for any $x,y \in \sX$, by Assumption~\ref{as1}-(b) and (\ref{perturbation}), we have 
\begin{align}
W_1(P(\cdot|x),P(\cdot|y)) &\leq K_2 \left( d_{\sX}(x,y) + \|f(x,H_1(\hQ,\hmu),\hmu) - f(y,H_1(\hQ,\hmu),\hmu)\| \right) \nonumber \\
&\leq \left( K_2 + \frac{K_F}{\rho} \right) d_{\sX}(x,y). \label{ubound} 
\end{align}
Let $\xi \in \P(\sX\times\sX)$ be the optimal coupling that achieves Wasserstein distance $W_1(\mu,\hmu)$. Similarly, for any $x,y \in \sX$, let $K(\cdot|x,y) \in \P(\sX\times\sX)$ be the optimal coupling that achieves Wasserstein distance $W_1\left(P(\cdot|x),P(\cdot|y)\right)$. Existence of such couplings follow from \cite[Corollary 5.22]{Vil09}. Note that 
\begin{align}
&\sup_{\|g\|_{\Lip}\leq1} \bigg| \int_{\sX \times \sA} \int_{\sX} g(y) \, p(dy|x,f(x,\hQ,\hmu),\hmu) \, \mu(dx) \nonumber \\
&\phantom{xxxxxxxxxxxxxx}- \int_{\sX \times \sA} \int_{\sX} g(y) \, p(dy|x,f(x,\hQ,\hmu),\hmu) \, \hmu(dx) \biggr| \nonumber \\
&= W_1\left(\mu P,\hmu P\right), \nonumber 
\end{align} 
where $\mu P(\cdot) = \int_{\sX} P(\cdot|x) \, \mu(dx)$ and $\hmu P(\cdot) = \int_{\sX} P(\cdot|x) \, \hmu(dx)$. Let us define $\nu(\cdot) = \int_{\sX \times \sX} K(\cdot|x,y) \, \xi(dx,dy)$. Clearly, $\nu$ is a coupling of $\mu P$ and $\hmu P$. Therefore, we have
\begin{align}
W_1\left(\mu P,\hmu P\right) &\leq \int_{\sX\times\sX} d_{\sX}(x,y) \, \nu(dx,dy) \nonumber \\
&= \int_{\sX \times \sX} \int_{\sX \times \sX} d_{\sX}(x,y) \, K(dx,dy|\hat{x},\hat{y}) \, \xi(d\hat{x},d\hat{y}) \nonumber \\
&= \int_{\sX \times \sX} W_1(P(\cdot|\hat{x}),P(\cdot|\hat{y})) \, \xi(d\hat{x},d\hat{y}) \nonumber \\
&\leq  \left( K_2 + \frac{K_F}{\rho} \right) \, \int_{\sX \times \sX} d_{\sX}(\hat{x},\hat{y}) \, \xi(d\hat{x},d\hat{y}) \text{ } \text{(by (\ref{ubound}))} \nonumber \\
&= \left( K_2 + \frac{K_F}{\rho} \right) \, W_1(\mu,\hmu). \nonumber
\end{align}
Hence, (1) follows. Note that (2) follows from (\ref{perturbation}) and Assumption~\ref{as1}-(b). 

Now, the theorem follows by combining (\ref{fin}) and (\ref{sin}).
\end{proof}

Since we have proven that $H$ is a contraction operator, by Banach Fixed Point Theorem, we can  conclude that the following value iteration algorithm converges to the fixed point of $H$. Using the output of this algorithm, we can then easily construct a mean-field equilibrium as stated in the theorem below.

\begin{algorithm}[h!]
\caption{Value Iteration Algorithm}
\label{valueit}
\begin{algorithmic}
\STATE{Start with $(Q_0,\mu_0)$}
\WHILE{$(Q_{n},\mu_{n}) \neq (Q_{n-1},\mu_{n-1})$}
\STATE{
$(Q_{n+1},\mu_{n+1}) = H(Q_{n},\mu_{n})$
}
\ENDWHILE
\RETURN{Fixed-point $(Q_*,\mu_*)$ of $H$}
\end{algorithmic}
\end{algorithm}

\begin{theorem}\label{main1}
Let $(Q_*,\mu_*)$ be the output of the above value iteration algorithm. Construct the policy $\pi_*(a|x) = \delta_{f^*(x)}(a)$, where 
$$f^*(x) = \argmin_{a' \in \sA} Q_*(x,a').$$
Then, the pair $(\pi_*,\mu_*)$ is a mean-field equilibrium.  
\end{theorem}

\begin{proof}
Note that $(Q_*,\mu_*)$ is a fixed point of $H$; that is, 
\begin{align}
Q_*(x,a) &= c(x,a,\mu_*) + \beta \int_{\sX} Q_{*,\min}(y) \, p(dy|x,a,\mu_*) \label{opt1} \\
\mu_{*}(\cdot) &= \int_{\sX} p(\cdot|x,a,\mu_*) \, \pi_*(a|x) \, \mu_{*}(dx). \label{opt2}
\end{align}
Here, (\ref{opt1}) implies that $\pi_* \in \Psi(\mu_*)$ and (\ref{opt2}) implies that $\mu_* \in \Lambda(\pi_*)$. Hence, $(\pi_*,\mu_*)$ is a mean-field equilibrium. 
\end{proof}

\subsection{Average Cost}\label{average}

In this section, we consider the average cost mean-field game. In addition to Assumption~\ref{as1} (except (\ref{disc-drift})), we assume the following conditions. Note that in place of (\ref{disc-drift}), we assume condition (b) below. 

\begin{assumption}\label{as3}
\begin{itemize}
\item [ ]
\item [(a)] There exists a sub-probability measure $\lambda$ on $\sX$ such that
$$p(\,\cdot\,|x,a,\mu) \geq \lambda(\,\cdot\,)$$
for all $x \in \sX$, $a \in \sA$, and $\mu \in \P(\sX)$.
\item [(b)] There exists non-negative constants $\alpha \in (0,1)$ and $b$ such that 
$$
\int_{\sX} w_{\max}(y) \, p(dy|x,a,\mu) \leq \alpha \, w(x,a) + b. 
$$
\item [(c)] The equality below holds: 
$$\hspace{-25pt}\kappa \coloneqq \max\left\{\alpha + \frac{K_F}{\rho}  K_1 , L_1+\frac{L_2}{1-K_2} K_1 + \bigg(\frac{K_F}{\rho} +1\bigg) K_1+ K_2 + \frac{K_F}{\rho}\right\} < 1.$$
\end{itemize}
\end{assumption}

\begin{remark}
Note that without loss of generality we can take $b = \int w d\lambda$. Indeed, if $b \leq \int w d\lambda$, we can replace $b$ with $\int w d\lambda$ without violating the inequality in Assumption~\ref{as3}-(b). Conversely, if $\int w d\lambda < b$, then by first increasing the value of $\alpha$ so that $\lambda(\sX)+\alpha>1$, and then adding a constant $l$ to $w$, where
\begin{align}
l \coloneqq \frac{b-\int w d\lambda}{\lambda(\sX)+\alpha-1}, \nonumber
\end{align}
we obtain $b  < \int w d\lambda$. Then, as before, we can set $b  = \int w d\lambda$ and Assumption~\ref{as3}-(b) now holds for the new $w$ and $\alpha$.
\end{remark}

Note that condition (b) is so-called the `drift inequality' and condition (a) is the so-called `minorization' condition, both of which were used in the literature for studying the ergodicity of Markov chains (see \cite[Theorem 7.3.11 and Proposition 10.2.5]{HeLa99}). These assumptions are quite general for studying the average cost stochastic control problems with unbounded one-stage costs. The minorization condition was also used to study average cost mean-field games with a compact state space \cite[Assumption A.3]{Wie19}. Assumption~\ref{as3}-(a) is true when the transition probability satisfies conditions R1(a) and R1(b) in \cite{HeMoRo91} (see also \cite[Remark 3.3]{HeMoRo91} and references therein for further conditions). For Assumption~\ref{as3}-(b), we refer the reader to the examples in \cite[Section 7.4]{HeLa99} to see under which conditions on the system components Assumption~\ref{as3}-(b) holds.

For any state-measure $\mu$, let us define the optimal value function as
$$
V_{\mu}^*(x) \coloneqq \inf_{\pi \in \Pi} \limsup_{T \rightarrow \infty} \frac{1}{T} E^{\pi}\biggl[ \sum_{t=0}^{T-1} c(x(t),a(t),\mu) \, \bigg| \, x(0) = x \biggr]. 
$$
We define the operator $R_{\mu}$ as 
\begin{align}
R_{\mu}\, u(x) 
&= \min_{a \in \sA} \biggl[ c(x,a,\mu) + \int_{\sX} u(y) \, p(dy|x,a,\mu) - \int_{\sX} u(y) \, \lambda(dy)\biggr] \nonumber \\
&= \min_{a \in \sA} \biggl[ c(x,a,\mu) + \int_{\sX} u(y) \, q(dy|x,a,\mu)\biggr], \nonumber
\end{align}
where $q(\,\cdot\,|x,a,\mu) \coloneqq p(\,\cdot\,|x,a,\mu) - \lambda(\,\cdot\,)$ is a sub-stochastic kernel and $u:\sX \rightarrow \R$ is a continuous function with finite $w_{\max}$-norm. Under Assumption~\ref{as1} and Assumption~\ref{as3}, one can prove that $R_{\mu}$ is a contraction operator with modulus $\alpha \in (0,1)$ \cite[Theorem 3.21]{SaLiYuBook}. Therefore, for each $\mu$, there exists a fixed point $h_{\mu}$ of $R_{\mu}$ by Banach Fixed Point Theorem. Note that if $R_{\mu} \, h_{\mu} = h_{\mu}$, then we have
$$
h_{\mu}(x) + \rho_{\mu} = \min_{a \in \sA} \biggl[ c(x,a,\mu) + \int_{\sX} h_{\mu}(y) \, p(dy|x,a,\mu) \biggr], 
$$
where $\rho_{\mu} = \int_{\sX} h_{\mu}(y) \, \gamma(dy)$. The last equation is called the average cost optimality equation (ACOE) in the literature \cite[Chapter 5]{HeLa96}. Since 
$$
\lim_{n \rightarrow \infty} \frac{E^{\pi}[h_{\mu}(x(n))]}{n} = 0
$$
for all $\pi \in \Pi$ by Assumption~\ref{as3}-(b), we have \cite[Theorem 5.2.4]{HeLa96}
$$
\rho_{\mu} = V_{\mu}^*(x) \text{ } \text{for all} \text{ } x \in \sX.  
$$
Moreover, if the mapping $f^*: \sX \rightarrow \sA$ attains its minimum in ACOE; that is,
\begin{align}
&\min_{a \in \sA} \bigg[c(x,a,\mu) + \int_{\sX} h_{\mu}(y) \, p(dy|x,a,\mu) \bigg] \nonumber \\
&\phantom{xxxx}= \bigg[c(x,f^*(x),\mu) + \int_{\sX} h_{\mu}(y) \, p(dy|x,f^*(x),\mu) \bigg], 
\end{align}
then the policy $\pi^*(a|x) = \delta_{f^*(x)}(a)$ is optimal.

We now introduce the value iteration algorithm. Similar to the discounted cost case, let us define the set on which $Q$-functions live: 
\begin{align}
&{\cal M} \coloneqq \bigg\{Q:\sX \times \sA \rightarrow [0,\infty); \, \|Q\|_{w} \leq \frac{M}{1-\alpha} \text{ } \text{and} \text{ } \|Q_{\min}\|_{\Lip} \leq \frac{L_2}{1-K_2} \bigg\}.\nonumber 
\end{align}
Note that, for any $(x,Q,\mu) \in \sX \times {\cal M} \times \P(\sX)$, by Assumption~\ref{as1}-(e), there exists a unique minimizer $f(x,Q,\mu)$ of
$$
c(x,a,\mu) + \int_{\sX} Q_{\min}(y) \, p(dy|x,a,\mu) \eqqcolon F(x,Q_{\min},\mu,a).
$$
Moreover, this unique minimizer $f(x,Q,\mu)$ makes the gradient of $F(x,Q_{\min},\mu,a)$ (with respect to $a$) zero; that is,
$$
\nabla \,F(x,Q_{\min},\mu,f(x,Q,\mu)) = 0.  
$$

Now, we define the mean-field equilibrium (MFE) operator  as follows:
$$L: {\cal M} \times \P(\sX) \ni (Q,\mu) \mapsto \left(L_1(Q,\mu),L_2(Q,\mu)\right) \in {\cal M} \times \P(\sX),$$
where  
\begin{align}
L_1(Q,\mu)(x,a) &\coloneqq c(x,a,\mu) + \int_{\sX} Q_{\min}(y) \, q(dy|x,a,\mu) \nonumber \\
L_2(Q,\mu)(\cdot) &\coloneqq \int_{\sX \times \sA} \hspace{-10pt} p(\cdot|x,f(x,Q,\mu),\mu) \, \mu(dx). \nonumber
\end{align}
Here, $f(x,Q,\mu) \in \sA$ is the unique minimizer of
\begin{align}
L_1(Q,\mu)(x,a) \coloneqq c(x,a,\mu) + \int_{\sX} Q_{\min}(y) \, q(dy|x,a,\mu).\nonumber
\end{align}
We first prove that $L$ is well-defined. 

\begin{lemma}
$L$ maps ${\cal M} \times \P(\sX)$ into itself.
\end{lemma}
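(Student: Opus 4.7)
The plan is to mirror the structure of the lemma for $H$ proved in Section~\ref{discount}, adapting each step to the average-cost setting where the key change is that the sub-stochastic kernel $q(\cdot|x,a,\mu) = p(\cdot|x,a,\mu) - \lambda(\cdot)$ replaces the kernel $\beta \, p(\cdot|x,a,\mu)$. As before, $L_2(Q,\mu)$ is manifestly an element of $\P(\sX)$ because it is a mixture of probability measures, so the entire work lies in showing $L_1(Q,\mu) \in {\cal M}$. Three things must be checked: non-negativity of $L_1(Q,\mu)$, the $w$-norm bound $\|L_1(Q,\mu)\|_w \leq M/(1-\alpha)$, and the Lipschitz bound $\|L_1(Q,\mu)_{\min}\|_{\Lip} \leq L_2/(1-K_2)$.

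Non-negativity is immediate: $c \geq 0$, $Q_{\min} \geq 0$, and $q$ is a non-negative (sub-probability) measure, so the integral is non-negative. For the $w$-norm, the main preparatory step is to exploit the Remark following Assumption~\ref{as3}, which lets us assume without loss of generality that $b = \int_\sX w_{\max}(y)\, \lambda(dy)$. Subtracting $\lambda$ from $p$ then absorbs the additive term in the drift inequality, yielding the clean estimate
\begin{equation*}
\int_\sX w_{\max}(y)\, q(dy|x,a,\mu) \leq \alpha\, w(x,a) + b - b = \alpha\, w(x,a).
\end{equation*}
With this in hand, the $w$-norm bound proceeds exactly as in the discounted case: using Assumption~\ref{as1}(d) for $c$ and $\|Q_{\min}\|_{w_{\max}} \leq \|Q\|_w \leq M/(1-\alpha)$, we get
\begin{equation*}
\sup_{x,a}\frac{|L_1(Q,\mu)(x,a)|}{w(x,a)} \leq M + \alpha\, \|Q\|_w \leq M + \alpha\, \frac{M}{1-\alpha} = \frac{M}{1-\alpha}.
\end{equation*}

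For the Lipschitz bound on $L_1(Q,\mu)_{\min}$, the strategy is to write the difference of two minima as a supremum of differences inside, then split into the one-stage cost term and the transition term. The key observation that simplifies the analysis relative to what one might expect is that the $\lambda$-contributions cancel identically: for fixed $a$,
\begin{equation*}
\int_\sX Q_{\min}\, q(dy|x,a,\mu) - \int_\sX Q_{\min}\, q(dy|\hat x,a,\mu) = \int_\sX Q_{\min}\, p(dy|x,a,\mu) - \int_\sX Q_{\min}\, p(dy|\hat x,a,\mu),
\end{equation*}
since the $\int Q_{\min}\, d\lambda$ terms are identical. Applying the Kantorovich--Rubinstein duality together with Assumption~\ref{as1}(b) (spatial Lipschitzness of $p$) and Assumption~\ref{as1}(a) (spatial Lipschitzness of $c$), followed by the bound $\|Q_{\min}\|_{\Lip} \leq L_2/(1-K_2)$, produces
\begin{equation*}
|L_1(Q,\mu)_{\min}(x) - L_1(Q,\mu)_{\min}(\hat x)| \leq \Bigl(L_2 + K_2\, \tfrac{L_2}{1-K_2}\Bigr) d_{\sX}(x,\hat x) = \tfrac{L_2}{1-K_2} d_{\sX}(x,\hat x).
\end{equation*}

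I do not anticipate a genuine obstacle here, since the argument is essentially a repeat of the discounted lemma; the only place requiring any real care is the use of the Remark to normalize $b = \int w_{\max}\, d\lambda$ so that the drift inequality transfers cleanly from $p$ to $q$, which is precisely what makes the ``$1-\alpha$'' (rather than ``$1-\beta\alpha$'') denominator work in the absence of a discount factor.
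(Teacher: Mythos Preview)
Your proof is correct and follows essentially the same approach as the paper's own proof. The paper is slightly more terse (it does not explicitly verify non-negativity and simply cites Assumption~\ref{as3}-(b) for the bound $\int w_{\max}\,dq \leq \alpha\, w$ without spelling out the normalization $b = \int w_{\max}\,d\lambda$ from the Remark), but the structure and all key estimates are identical.
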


\begin{proof}
It is sufficient to prove $L_1(Q,\mu) \in {\cal M}$. Let $(Q,\mu) \in {\cal M} \times \P(\sX)$. Then, we have
\begin{align}
\sup_{x,a} \frac{\left| L_1(Q,\mu)(x,a) \right|} {w(x,a)} = &\sup_{x,a} \frac{\left| c(x,a,\mu) + \int_{\sX} Q_{\min}(y) \, Q(dy|x,a,\mu) \right|} {w(x,a)} \nonumber \\ 
&\leq \sup_{x,a} \frac{\left| c(x,a,\mu) \right|} {w(x,a)} + \sup_{x,a} \frac{\left|\int_{\sX} Q_{\min}(y) \, q(dy|x,a,\mu) \right|} {w(x,a)} \nonumber \\
&\leq M + \|Q_{\min}\|_{w_{\max}} \, \sup_{x,a} \frac{\left|\int_{\sX} w_{\max}(y) \, q(dy|x,a,\mu) \right|} {w(x,a)}\nonumber \\
&\overset{(1)}{\leq} M+ \alpha \,  \|Q\|_{w} \nonumber \\
&\leq M + \alpha \, \frac{M}{1-\alpha} = \frac{M}{1-\alpha}, \nonumber
\end{align}
where (1) follows from Assumption~\ref{as3}-(b) and $\|Q_{\min}\|_{w_{\max}} \leq \|Q\|_w$.
Moreover, for any $x, \hat{x} \in \sX$, we have 
\begin{align}
&|L_1(Q,\mu)_{\min}(x) - L_1(Q,\mu)_{\min}(\hat{x})| \nonumber \\
&=\bigg| \min_{a \in \sA} \bigg[c(x,a,\mu) + \int_{\sX} Q_{\min}(y) \, q(dy|x,a,\mu)\bigg] \nonumber \\
&\phantom{xxxxxxxxxxxxx}- \min_{a \in \sA} \bigg[c(\hat{x},a,\mu) + \int_{\sX} Q_{\min}(y) \, q(dy|\hat{x},a,\mu)\bigg] \bigg| \nonumber \\
&\leq \sup_{a \in \sA} |c(x,a,\mu) - c(\hat{x},a,\mu)| \nonumber \\
&\phantom{xxxxxxxxxxxxxxxxx}+ \sup_{a \in \sA} \left|
\int_{\sX} Q_{\min}(y) \, p(dy|x,a,\mu) - \int_{\sX} Q_{\min}(y) \, p(dy|\hat{x},a,\mu) \right| \nonumber \\
&\overset{(1)}{\leq} L_2 \, d_{\sX}(x,\hat{x}) + K_2 \, \|Q_{\min}\|_{\Lip} \, d_{\sX}(x,\hat{x}) \nonumber \\
&\leq \frac{L_2}{1-K_2} \, d_{\sX}(x,\hat{x}), \nonumber
\end{align}
where (1) follows from Assumption~\ref{as1}-(a),(b). This implies that $L_1(Q,\mu) \in {\cal M}$.
\end{proof}

This result implies that MFE-operator $L$ is well-defined. Our next goal is to prove that $L$ is a contraction operator. After that, we will introduce a value iteration algorithm which will give a mean-field equilibrium.  

\begin{theorem}\label{aver-local}
The mapping $L: {\cal M} \times \P(\sX) \rightarrow  {\cal M} \times \P(\sX)$  is a contraction with constant $\kappa$, where $\kappa$ is the constant in Assumption~\ref{as3}. 
\end{theorem}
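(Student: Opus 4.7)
The proof strategy is to mirror the argument of Theorem~\ref{disc-local}, with two key substitutions: the discount factor $\beta$ is set to $1$, and the transition kernel $p$ inside $L_1$ is replaced by the sub-stochastic kernel $q = p - \lambda$. The metric on $\mathcal{M} \times \mathcal{P}(\sX)$ is taken to be the max of the $w$-norm and $W_1$ distances, so it suffices to show that each of the two output components is Lipschitz in $(Q,\mu)$ with coefficients summing to at most $\kappa$ in each argument.

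First, I would bound $\|L_1(Q,\mu) - L_1(\hat Q,\hat\mu)\|_w$. Following the same decomposition as in Theorem~\ref{disc-local}, the three pieces are: the one-stage cost difference controlled by Assumption~\ref{as1}-(a), giving $L_1 W_1(\mu,\hat\mu)$; the integrand difference $Q_{\min} - \hat Q_{\min}$ against $q(\cdot|x,a,\mu)$, bounded by $\|Q-\hat Q\|_w \sup_{x,a} \int w_{\max}(y) q(dy|x,a,\mu)/w(x,a)$; and the kernel perturbation term $\hat Q_{\min}$ against $q(\cdot|x,a,\mu) - q(\cdot|x,a,\hat\mu)$. For the second piece I would use Assumption~\ref{as3}-(b) together with the preceding Remark to get $\int w_{\max}(y) q(dy|x,a,\mu) \leq \alpha w(x,a)$ (the subtraction of $\lambda$ absorbs the additive constant $b$), yielding the factor $\alpha$. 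For the third piece, the key observation is that $\lambda$ does not depend on $\mu$, so $q(\cdot|x,a,\mu) - q(\cdot|x,a,\hat\mu) = p(\cdot|x,a,\mu) - p(\cdot|x,a,\hat\mu)$, and the Kantorovich duality bound via Assumption~\ref{as1}-(b) plus the Lipschitz constant $\|\hat Q_{\min}\|_{\Lip} \leq L_2/(1-K_2)$ yields $\frac{L_2}{1-K_2} K_1 W_1(\mu,\hat\mu)$. Altogether,
\begin{align}
\|L_1(Q,\mu) - L_1(\hat Q,\hat\mu)\|_w \leq \alpha \|Q-\hat Q\|_w + \left(L_1 + \frac{L_2}{1-K_2} K_1\right) W_1(\mu,\hat\mu). \nonumber
\end{align}

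Next, for the component $L_2$, the argument is essentially identical to that in Theorem~\ref{disc-local}, since $L_2$ is defined through $p$ (not $q$) and through the selector $f(x,Q,\mu)$, which continues to satisfy the perturbation bound
\begin{align}
\|f(y,\hat Q,\hat\mu) - f(x,Q,\mu)\| \leq \frac{K_F}{\rho}\left(d_\sX(x,y) + \|Q-\hat Q\|_w + W_1(\mu,\hat\mu)\right) \nonumber
\end{align}
by Assumption~\ref{as1}-(e). Reusing the coupling construction and the bound~\eqref{ubound} verbatim yields
\begin{align}
W_1(L_2(Q,\mu),L_2(\hat Q,\hat\mu)) \leq \frac{K_F}{\rho} K_1 \|Q - \hat Q\|_w + \left(\left(\tfrac{K_F}{\rho}+1\right) K_1 + K_2 + \tfrac{K_F}{\rho}\right) W_1(\mu,\hat\mu). \nonumber
\end{align}
Summing the coefficients on $\|Q-\hat Q\|_w$ and on $W_1(\mu,\hat\mu)$ across the two bounds and taking the max reproduces exactly the constant $\kappa$ of Assumption~\ref{as3}-(c), establishing contraction.

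The only subtle point, and the main place this proof deviates from the discounted case, is step of bounding $\int w_{\max}(y) q(dy|x,a,\mu)/w(x,a)$ by $\alpha$ rather than by $\alpha + b/w(x,a)$. This is precisely why the Remark following Assumption~\ref{as3} is stated: one can always normalize so that the additive part of the drift inequality is cancelled by $\int w_{\max} d\lambda$, making $q$ satisfy a genuine contractive drift. Once this normalization is in place, every remaining manipulation is a direct transcription of the discounted-case proof with $\beta=1$ and $p$ replaced by $q$ wherever $q$ appears in $L_1$.
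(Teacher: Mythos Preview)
Your proposal is correct and follows essentially the same route as the paper's own proof: both transcribe the discounted-case argument with $\beta$ replaced by $1$ and $p$ replaced by $q$ inside $L_1$, obtain the identical bounds \eqref{av-fin} and \eqref{av-sin}, and combine them to produce $\kappa$. Your explicit remark that the normalization $b=\int w_{\max}\,d\lambda$ is what yields $\int w_{\max}\,dq \le \alpha w$ is in fact more detailed than the paper's proof, which silently cites Assumption~\ref{as3}-(b) at that step; one small slip is your description of the product metric as a max, whereas the coefficient bookkeeping you actually perform (summing across the two components and then taking the max over the $Q$- and $\mu$-coefficients) corresponds to the sum metric $\|Q-\hat Q\|_w + W_1(\mu,\hat\mu)$, which is the one that reproduces $\kappa$ exactly.
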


\begin{proof}
The proof is similar to the proof of Theorem~\ref{disc-local}. Indeed, fix any $(Q,\mu)$ and $(\hQ,\hmu)$ in ${\cal M} \times \P(\sX)$. First, we analyse the distance between $L_1(Q,\mu)$ and $L_1(\hQ,\hmu)$:  
\begin{align}
&\|L_1(Q,\mu) - L_1(\hQ,\hmu)\|_{w} \nonumber \\
&= \sup_{x,a} \frac{\bigg| c(x,a,\mu) + \int_{\sX} Q_{\min}(y) \, q(dy|x,a,\mu) - c(x,a,\hmu) - \int_{\sX} \hQ_{\min}(y) \, q(dy|x,a,\hmu) \bigg|}{w(x,a)} \nonumber \\
&\leq  \sup_{x,a} \frac{\big| c(x,a,\mu) - c(x,a,\hmu) \big|}{w(x,a)} \nonumber \\
&\phantom{xxxxxxxxxx}+ \sup_{x,a} \frac{\bigg| \int_{\sX} Q_{\min}(y) \, q(dy|x,a,\mu) -\int_{\sX} \hQ_{\min}(y) \, q(dy|x,a,\hmu) \bigg|}{w(x,a)} \nonumber \\
&\overset{(1)}{\leq} L_1 \, W_1(\mu,\hmu) \nonumber \\
&\phantom{xxxxxxxxxx}+ \sup_{x,a} \frac{\bigg| \int_{\sX} Q_{\min}(y) \, q(dy|x,a,\mu) -\int_{\sX} \hQ_{\min}(y) \, q(dy|x,a,\mu) \bigg|}{w(x,a)} \nonumber \\
&\phantom{xxxxxxxxxx}+ \sup_{x,a} \frac{\bigg| \int_{\sX} \hQ_{\min}(y) \, q(dy|x,a,\mu) -\int_{\sX} \hQ_{\min}(y) \, q(dy|x,a,\hmu) \bigg|}{w(x,a)} \nonumber \\
&\overset{(2)}{\leq} L_1 \, W_1(\mu,\hmu) \nonumber \\
&\phantom{xxxxxxxxxx}+  \|Q_{\min} - \hQ_{\min}\|_{w_{\max}} \, \sup_{x,a} \frac{\int_{\sX} w_{\max}(y) \, q(dy|x,a,\mu)}{w(x,a)} \nonumber \\
&\phantom{xxxxxxxxxx}+ \|\hQ_{\min}\|_{\Lip} \, \sup_{x,a} \frac{W_1(p(\cdot|x,a,\mu),p(\cdot|x,a,\hmu))}{w(x,a)} \nonumber \\
&\overset{(3)}{\leq} L_1 \, W_1(\mu,\hmu) + \alpha \, \|Q - \hQ\|_{w} + \frac{L_2}{1- K_2} \, K_1 \, W_1(\mu,\hmu) \label{av-fin} 
\end{align}
where (1) follows from Assumption~\ref{as1}-(a), (2) follows from the fact that $\hQ_{\min} \in \Lip(\sX)$, and (3) follows from Assumption~\ref{as1}-(b) and Assumption~\ref{as3}-(b).

Next, we consider the distance between $L_2(Q,\mu)$ and $L_2(\hQ,\hmu)$. First of all, by using a similar analysis as in the proof of Theorem~\ref{disc-local}, we can bound the distance between the unique minimizer $f(x,Q,\mu)$ of $L_1(Q,\mu)(x,a)$ and the unique minimizer $f(x,\hQ,\hmu)$ of $L_1(\hQ,\hmu)(x,a)$ as follows:
\begin{align}\label{av-perturbation}
\|f(y,\hQ,\hmu) - f(x,Q,\mu)\| \leq \frac{K_F}{\rho} \,\left(d_{\sX}(x,y) + \|Q-\hQ\|_w + W_1(\mu,\hmu)\right)
\end{align}
Using this and the same analysis in the proof of Theorem~\ref{disc-local}, we can now obtain the following bound on the distance between $L_2(Q,\mu)$ and $L_2(\hQ,\hmu)$:
\begin{align}
&W_1(L_2(Q,\mu),L_2(\hQ,\hmu)) \label{av-sin} \\
&\phantom{xxxxxxxxxxxxxx}\leq \frac{K_F}{\rho} K_1 \,\left(\|Q-\hQ\|_w + W_1(\mu,\hmu)\right) + (K_1 + K_2 + \frac{K_F}{\rho}) \, W_1(\mu,\hmu). \nonumber
\end{align}
Now, theorem follows by combining (\ref{av-fin}) and (\ref{av-sin}).
\end{proof}

Since we have shown that $L$ is a contraction operator, by Banach Fixed Point Theorem, we can  conclude that the following value iteration algorithm converges to the fixed point of $L$. Using the output of this algorithm, we can then easily construct a mean-field equilibrium as stated in the theorem below.

\begin{algorithm}[h!]
\caption{Value Iteration Algorithm}
\label{valueit}
\begin{algorithmic}
\STATE{Start with $(Q_0,\mu_0)$}
\WHILE{$(Q_{n},\mu_{n}) \neq (Q_{n-1},\mu_{n-1})$}
\STATE{
$(Q_{n+1},\mu_{n+1}) = L(Q_{n},\mu_{n})$
}
\ENDWHILE
\RETURN{Fixed-point $(Q_*,\mu_*)$ of $L$}
\end{algorithmic}
\end{algorithm}

\begin{theorem}\label{main1}
Let $(Q_*,\mu_*)$ be the output of the above value iteration algorithm. Construct the policy $\pi_*(a|x) = \delta_{f^*(x)}(a)$, where 
$$f^*(x) = \argmin_{a' \in \sA} Q_*(x,a').$$
Then, the pair $(\pi_*,\mu_*)$ is a mean-field equilibrium.  
\end{theorem}

\begin{proof}
Note that $(Q_*,\mu_*)$ is a fixed point of $L$; that is, 
\begin{align}
Q_*(x,a) &= c(x,a,\mu_*) + \int_{\sX} Q_{*,\min}(y) \, q(dy|x,a,\mu_*) \label{av-opt1} \\
\mu_{*}(\cdot) &= \int_{\sX} p(\cdot|x,a,\mu_*) \, \pi_*(a|x) \, \mu_{*}(dx). \label{av-opt2}
\end{align}
From (\ref{av-opt1}), we obtain 
\begin{align}
Q_{*,\min}(x) + \rho_{*} = \min_{a \in \sA} \bigg[c(x,a,\mu_*) + \int_{\sX} Q_{*,\min}(y) \, p(dy|x,a,\mu_*)\bigg], \nonumber
\end{align}
where $\rho_* = \int_{\sX} Q_{*,\min}(x) \, \lambda(dx)$. Hence, $(f_*,Q_{*,\min},\rho_*)$ solves the ACOE equation. This implies that $\pi_* \in \Psi(\mu_*)$; that is, $\pi_*$ is an optimal policy for $\mu_*$ (see the discussion at the beginning of this section). Moreover, (\ref{av-opt2}) implies that $\mu_* \in \Lambda(\pi_*)$. Hence, $(\pi_*,\mu_*)$ is a mean-field equilibrium. 
\end{proof}

\section{Conclusion}\label{conc}

This paper has established a value iteration algorithm for discrete time mean-field games subject to discounted and average cost criteria. Under certain regularity conditions on systems components, we have proved that the mean-field equilibrium (MFE) operator in the value iteration algorithm is a contraction. We have then used the fixed point of the MFE operator to construct a mean-field equilibrium.


\begin{thebibliography}{10}

\bibitem{AcPo16}
Y.~Achdou and A.Porretta.
\newblock Convergence of a finite difference scheme to weak solutions of the
  system of partial differential equations arising in mean field games.
\newblock {\em SIAM J. Numer. Anal.}, 54(1):161--186, 2016.

\bibitem{AcCa10}
Y.~Achdou and I.~Capuzzo-Dolcetta.
\newblock Mean field games: numerical methods.
\newblock {\em SIAM J. Numer. Anal.}, 48(3):1136--1162, 2010.

\bibitem{AcCaCa13}
Y.~Achdou, F.Camilli, and I.~Capuzzo-Dolcetta.
\newblock Mean field games: convergence of a finite difference method.
\newblock {\em SIAM J. Numer. Anal.}, 51(5):2585--2612, 2013.

\bibitem{AcLa20}
Y.~Achdou and M.Lauriere.
\newblock Mean field games and applications: numerical aspects.
\newblock arXiv:2003.04444, 2020.

\bibitem{AdJoWe15}
S.~Adlakha, R.~Johari, and G.Y. Weintraub.
\newblock Equilibria of dynamic games with many players: Existence,
  approximation, and market structure.
\newblock {\em Journal of Economic Theory}, 156:269--316, 2015.

\bibitem{AlFeGo16}
N.~Almulla, R.Ferreira, and D.Gomes.
\newblock Two numerical approaches to staionary mean-field games.
\newblock {\em Dyn Games Appl}, 7:657--682, 2016.

\bibitem{BeFrPh13}
A.~Bensoussan, J.~Frehse, and P.~Yam.
\newblock {\em Mean {F}ield {G}ames and {M}ean {F}ield {T}ype {C}ontrol
  {T}heory}.
\newblock Springer, New York, 2013.

\bibitem{Bis15}
A.~Biswas.
\newblock Mean field games with ergodic cost for discrete time {M}arkov
  processes.
\newblock arXiv:1510.08968, 2015.

\bibitem{BoSh00}
J.F. Bonnans and A.~Shapiro.
\newblock {\em Perturbation Analysis of Optimization Problems}.
\newblock Springer, New York, 2000.

\bibitem{Ca11}
P.~Cardaliaguet.
\newblock {\em Notes on {M}ean-field {G}ames}.
\newblock 2011.

\bibitem{CaDe13}
R.~Carmona and F.~Delarue.
\newblock Probabilistic analysis of mean-field games.
\newblock {\em SIAM J. Control Optim.}, 51(4):2705--2734, 2013.

\bibitem{ElLiNi13}
R.~Elliot, X.~Li, and Y.~Ni.
\newblock Discrete time mean-field stochastic linear-quadratic optimal control
  problems.
\newblock {\em Automatica}, 49:3222--3233, 2013.

\bibitem{GoSa18}
D.A. Gomes and J.Saude.
\newblock Numerical methods for finite-state mean-field games satisfying
  monotonicity condition.
\newblock {\em Applied mathematics and optimization}, 2018.

\bibitem{GoMoSo10}
D.A. Gomes, J.~Mohr, and R.R. Souza.
\newblock Discrete time, finite state space mean field games.
\newblock {\em J. Math. Pures Appl.}, 93:308--328, 2010.

\bibitem{GoSa14}
D.A. Gomes and J.~Sa\'{u}de.
\newblock Mean field games models - a brief survey.
\newblock {\em Dyn. Games Appl.}, 4(2):110--154, 2014.

\bibitem{Gue12}
O.~Guetant.
\newblock New numerical methods for mean field games with quadratic costs.
\newblock {\em Networks and Heterogeneous Media}, 7(2):315--336, 2012.

\bibitem{HaSi19}
S.~Hadikhanloo and F.J.Silva.
\newblock Finite mean field games: fictitous play and convergence to a first
  order continuous mean field game.
\newblock {\em Journel de Mathematiques Pures et Appliquees}, 132:369--397,
  2019.

\bibitem{HaRa19}
B.~Hajek and M.~Raginsky.
\newblock Statistical learning theory.
\newblock {\em Lecture Notes}, 2019.

\bibitem{HeLa96}
O.~Hern\'andez-Lerma and J.B. Lasserre.
\newblock {\em Discrete-Time {M}arkov Control Processes: Basic Optimality
  Criteria}.
\newblock Springer, 1996.

\bibitem{HeLa99}
O.~Hern\'andez-Lerma and J.B. Lasserre.
\newblock {\em Further Topics on Discrete-Time {M}arkov Control Processes}.
\newblock Springer, 1999.

\bibitem{HeMoRo91}
O.~Hern\'andez-Lerma, R.~Montes-De-Oca, and R.~Cavazos-Cadena.
\newblock Recurrence conditions for {M}arkov decision processes with {B}orel
  state space: a survey.
\newblock {\em Ann. Oper. Res.}, 28(1):29--46, 1991.

\bibitem{Hua10}
M.~Huang.
\newblock Large-population {LQG} games involving major player: The {N}ash
  certainty equivalence principle.
\newblock {\em SIAM J. Control Optim.}, 48(5):3318--3353, 2010.

\bibitem{HuCaMa07}
M.~Huang, P.E. Caines, and R.P. Malham\'{e}.
\newblock Large-population cost coupled {LQG} problems with nonuniform agents:
  Individual-mass behavior and decentralized $\epsilon$-{N}ash equilibria.
\newblock {\em IEEE. Trans. Autom. Control}, 52(9):1560--1571, 2007.

\bibitem{HuMaCa06}
M.~Huang, R.P. Malham\'{e}, and P.E. Caines.
\newblock Large population stochastic dynamic games: Closed loop
  {M}c{K}ean-{V}lasov systems and the {N}ash certainty equivalence principle.
\newblock {\em Communications in Information Systems}, 6:221--252, 2006.

\bibitem{LaLi07}
J.~Lasry and P.Lions.
\newblock Mean field games.
\newblock {\em Japan. J. Math.}, 2:229--260, 2007.

\bibitem{MoBa15}
J.~Moon and T.~Ba\c{s}ar.
\newblock Discrete-time decentralized control using the risk-sensitive
  performance criterion in the large population regime: a mean field approach.
\newblock In {\em ACC 2015}, Chicago, Jul. 2015.

\bibitem{MoBa16-cdc}
J.~Moon and T.~Ba{\c{s}}ar.
\newblock Discrete-time mean field {S}tackelberg games with a large number of
  followers.
\newblock In {\em CDC 2016}, Las Vegas, Dec. 2016.

\bibitem{MoBa16}
J.~Moon and T.~Ba{\c{s}}ar.
\newblock Robust mean field games for coupled {M}arkov jump linear systems.
\newblock {\em International Journal of Control}, 89(7):1367--1381, 2016.

\bibitem{NoNa13}
M.~Nourian and G.N. Nair.
\newblock Linear-quadratic-{G}aussian mean field games under high rate
  quantization.
\newblock In {\em CDC 2013}, Florence, Dec. 2013.

\bibitem{SaBaRaSIAM}
N.~Saldi, T.~Ba{\c{s}}ar, and M.~Raginsky.
\newblock {M}arkov--{N}ash equilibria in mean-field games with discounted cost.
\newblock {\em SIAM Journal on Control and Optimization}, 56(6):4256--4287,
  2018.

\bibitem{SaLiYuBook}
N.~Saldi, T.~Linder, and S.~Y\"uksel.
\newblock {\em Finite approximations in discrete-time stochastic control:
  Quantized models and asymptotic optimality}.
\newblock Springer, Cham, 2018.

\bibitem{TeZhBa14}
H.~Tembine, Q.~Zhu, and T.~Ba\c{s}ar.
\newblock Risk-sensitive mean field games.
\newblock {\em IEEE. Trans. Autom. Control}, 59(4):835--850, 2014.

\bibitem{Vil09}
C.~Villani.
\newblock {\em Optimal transport: Old and New}.
\newblock Springer, 2009.

\bibitem{WiAl05}
P.~Wiecek and E.~Altman.
\newblock Stationary anonymous sequential games with undiscounted rewards.
\newblock {\em Journal of Optimization Theory and Applications},
  166(2):686--710, 2015.

\bibitem{Wie19}
Piotr Wiecek.
\newblock Discrete-time ergodic mean-field games with average reward on compact
  spaces.
\newblock {\em Dynamic Games and Applications}, pages 1--35, 2019.

\end{thebibliography}
\end{document}